\theoremstyle{definition}
\newtheorem{definition}{Definition}[section]
\theoremstyle{plain}
\newtheorem{theorem}{Theorem}[section]
\newtheorem{lemma}{Lemma}[section]
\newtheorem{corollary}{Corollary}[section]
\newcommand{\ra}{\rangle}
\newcommand{\la}{\langle}
\newcommand{\sgn}{ \textrm{sgn}}
\newcommand{\eq}{Eq.~}
\newcommand{\fig}{Fig.~}
\newcommand{\Ao}[1]{\hat{a}_{#1}}
\newcommand{\Aod}[1]{\hat{a}^\dag_{#1}}
\DeclareRobustCommand\openzero{\leavevmode\hbox{0\kern-.55em0}}
\begin{document}

\begin{abstract}
Thermal machines are physical systems designed to convert thermal energy into practical work through cyclic state transformations. A key component in such a machine is a clock-equipped control element that dictates which interaction Hamiltonian governs the system-reservoir interactions at specific times, while itself remaining unaffected. However, in the context of quantum dynamics, it is well known that maintaining perfect isolation is practically impossible, except under highly idealized conditions. In this study, we begin with such an idealized model for a clock and systematically relax its main assumptions to develop a more realistic framework. Our approach yields a simplified yet physically consistent description of clock dynamics, enabling the analysis of deviations from ideal time-keeping, which we interpret as clock degradation. We introduce a continuous time operator to derive a lower bound on this degradation, grounded in a generalized time-energy uncertainty relation. These results highlight trade-offs between clock performance and energy uncertainty in physically realizable control schemes.
\end{abstract}

\title{Physically constrained quantum clock-driven dynamics}

\author{Lea Lautenbacher}
\email{lea.lautenbacher@uni-ulm.de}
\affiliation{Institute of Theoretical Physics \& IQST, Ulm University, Albert-Einstein-Allee 11 89081, Ulm, Germany}

\author{Giovanni Spaventa}
\affiliation{Institute of Theoretical Physics \& IQST, Ulm University, Albert-Einstein-Allee 11 89081, Ulm, Germany}

\author{Dario Cilluffo}
\email{dario.cilluffo@uni-ulm.de}
\affiliation{Institute of Theoretical Physics \& IQST, Ulm University, Albert-Einstein-Allee 11 89081, Ulm, Germany}

\author{Susana F. Huelga}
\affiliation{Institute of Theoretical Physics \& IQST, Ulm University, Albert-Einstein-Allee 11 89081, Ulm, Germany}

\author{Martin B. Plenio}
\affiliation{Institute of Theoretical Physics \& IQST, Ulm University, Albert-Einstein-Allee 11 89081, Ulm, Germany}

%\date{\today}
\maketitle

\section{Introduction}

\begin{figure}[]
\includegraphics[scale=0.35,angle=0]{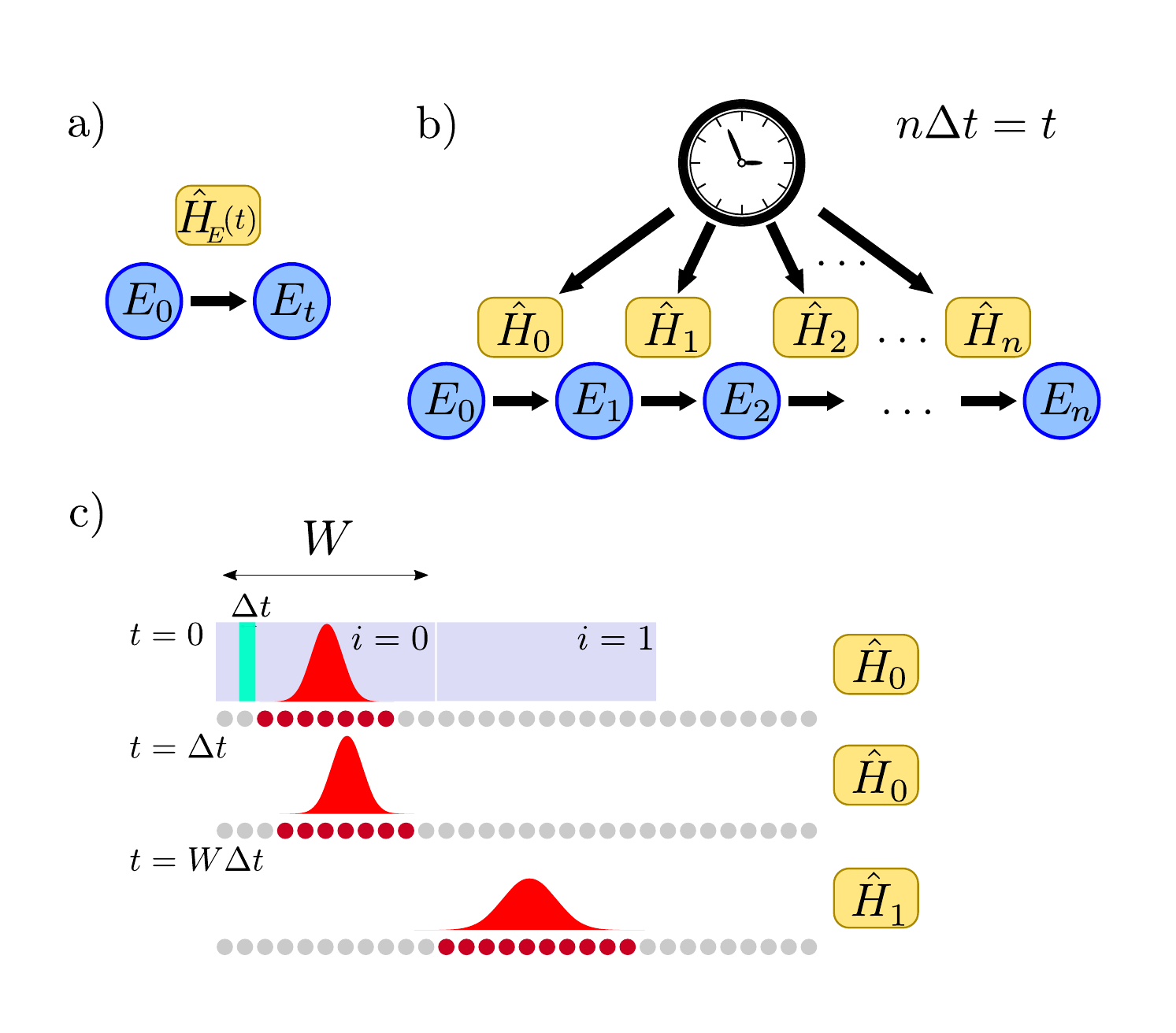}
\centering
\caption{Autonomous clock-driven quantum machine. 
a) The engine evolves from the state $E_0$ to the state $E_t$ under the action of a propagator generated by the time-dependent Hamiltonian $\hat{H}_E(t)$.
b) The time-dependent Hamiltonian is represented by a series of time-independent Hamiltonians, with the addition of a clock mechanism that chooses the appropriate generator at each intermediate stage of the time evolution. Together, the engine and clock form an autonomous system.
c) A gaussian pulse traveling on a line acts as clock by selecting the generator of the evolution of the engine.
The switching process relies on the field's intensity within a time interval of magnitude $W$, which serves as a measurement of the pulse's ``position". The field is decomposed into a succession of noninteracting time-bin ancillae. For the sake of simplicity we render the clock degradation process as a pure broadening of the pulse.}
\label{fig_0}
\end{figure}

A \textit{heat engine} can generally be defined as an open system coupled to many portions of the surrounding environment, which act as reservoirs. The primary aim of a heat engine is generating power in the form of mechanical work \cite{ReviewEngines}.
The realization of thermodynamic cycles, whether classical or quantum, relies on time-dependent Hamiltonians, which often incur significant energy costs that are difficult to fully capture in theoretical models. This problem can be addressed by embedding the system within an expanded framework that includes an additional ``clock" system to keep track of time. The engine and clock together evolve under a time-independent Hamiltonian in a larger Hilbert space \cite{Ah_Boh}. While this approach increases the dimensionality and complexity of the system due to the expanded Hilbert space, it compensates by making the overall dynamics autonomous, removing the need for external time-dependent driving.
This framework has offered valuable insights into how the intrinsic physics of the time-keeping system \cite{Woods_2018, PhysRevX.13.011016, Huber24} and the process of time sampling \cite{Meier24, PRXErker, Xuereb23} affect the dynamics of generic quantum machines, with notable implications for quantum computing protocols \cite{feynmansimulating,Feynman85,Lloyd96,watkins,tamascelli,woods2024}. Additionally, this perspective has become foundational in the resource theory of thermodynamics \cite{horodecki2013fundamental}.
In the most general framework, a quantum system that can either access or generate a time signal and use it to control the evolution of another system, while remaining robust to perturbations (such as back reactions from the engine), can serve as a clock \cite{Woods_2018}.
A minimal model including effectively all these features has been recently proposed in \cite{Malabarba_2015}: 
the time-dependent Hamiltonian coupling system and environment is modeled as a set of time-independent operators acting sequentially during the evolution time.
Each operator is correlated to the position of an external particle (the clock) freely moving under a Hamiltonian that is assumed to be linear in the momentum. 
The control mechanism is provided by an effective coupling between the particle and the engine that is assumed to commute with the free Hamiltonian of the engine (covariant operation).
As pointed out in \cite{PhysRevX.13.011016}, this model is inherently nonphysical because the particular free evolution assumed for the clock requires an unbounded clock Hamiltonian.
This argument can be seen as another facet of the celebrated \textit{Pauli objection to the existence of a time operator in standard quantum mechanics} \cite{PauliBook, HILGEVOORD200529}. Namely, the equation of motion of a self-adjoint time operator $\hat{T}$ reads
\begin{align}
    \dot{\hat{T}} = i [\hat{H},\hat{T}] = \mathbb{1}\, \Rightarrow  [\hat{T},\hat{H}] = i \mathbb{1}\,,
    \label{eq:pauli_argument}
\end{align}
thus if such operators were to exist, they would need to be unitarily equivalent to $\hat{x}$ and $\hat{p}$, implying that their spectra are continuous and unbounded from below, implying the lack of a ground state, and equilibrium. As a result, the system would require infinite energy, being unsuitable for describing physically realizable systems. In essence, the commutation relation involving $\hat{T}$ and the Hamiltonian operator, as outlined in \eq\eqref{eq:pauli_argument}, lacks \textit{exact} physical solutions within the framework of standard non-relativistic quantum mechanics.
This issue has deep historical roots within the genesis of quantum theory \cite{HILGEVOORD200529}.
Over the years, various strategies have emerged to confront this challenge, each presenting distinctive viewpoints. These strategies  include the proposition of non-self-adjoint time operators \cite{olkhovsky1974time}, as well as explorations within the framework of relativistic quantum field theory \cite{bauer2014dynamical}. Remarkably, an alternative perspective has advocated for the retention of unbounded operators \cite{Leon_2017}.
Other approaches in the literature, as discussed in \cite{Woods_2018,PhysRevX.13.011016}, often rely on well-motivated finite-dimensional approximations of the time operator $\hat{T}$.

In this work, we replace the particle used as a clock in \cite{Malabarba_2015} with a fully quantum object—an excitation of a quantum field propagating along a straight trajectory—and track time by monitoring its mean position. Keeping the structure of the coupling Hamiltonian between the system and the engine unchanged, we demonstrate that this interaction directly affects the wave packet's dispersion law, leading to a significant deterioration in the clock's performance. This result provides a framework for approximating the solution of \eq\eqref{eq:pauli_argument} using infinite-dimensional operators, which play a fundamental role in the model. More importantly, we establish a fundamental lower bound on clock degradation, showing that as long as the clock remains coupled to an engine, a certain level of deterioration is unavoidable.

%In this study, we extend the ideal model presented in \cite{Malabarba_2015} to encompass a realistic quantum framework, addressing complexities previously unaccounted for. Specifically, our approach involves utilizing a quantum field as our clock-system and tracking time by monitoring the motion of a coherent pulse as it propagates along a linear trajectory.
%Under the same fundamental assumptions as in \cite{Malabarba_2015} regarding the coupling Hamiltonian between the system and the engine, we demonstrate that this coupling directly influences the dispersion law of the wave-packet, resulting in a significant deterioration of the clock's performance.
%This result paves the way to address the challenge of approximating the solution of \eq\eqref{eq:pauli_argument} by means of infinite-dimensional operators that form an integral part of the model.

The paper is structured as follows: in Sec.~\ref{sec:definition} we define the global model of engine and clock and describe the dynamics of the system using the framework of quantum collision models (QCM) \cite{Ciccarello,Gross,Ciccarello_2022,lacroix2024makingquantumcollisionmodels}. Subsequently, in Sec.~\ref{sec:int_model} we introduce a model for the interaction between the control system and the engine. This model, designed for situations characterized by minor degradation, serves to restrict deviations from the desired dynamics we aim to implement. Sec.~\ref{sec:degr} explicitly addresses the issue of degradation and in Sec.~\ref{sec:time_ops} we return to the problem of time operators to quantitatively assess how the unavoidable degradation of physical clocks affects the precise definition of time operators in QM. 

\section{Definition of engine and clock}
\label{sec:definition}
Following the reasoning in \cite{Malabarba_2015} we consider a bipartite Hilbert space $\mathcal{H} = \mathcal{H}_E \otimes \mathcal{H}_C$, where $E$ identifies a generic open system and its surrounding environment, and $C$ an auxiliary system denoted as \textit{clock}.
The role of the clock is controlling the evolution of the system without any external control.
The total Hamiltonian reads
\begin{align}
\hat{H} = \hat{H}_E + \hat{H}_C + \hat{V}_{EC} \, , 
\label{totalH}
\end{align}
where $H_E$ and $H_C$ are the free Hamiltonians of engine and clock, respectively. The identities over the complementary Hilbert spaces are omitted. In \cite{Malabarba_2015} the interaction between engine and clock takes the form
\begin{align}
\hat{V}_{EC} = \int_{\mathbb{R}} \!\!\! d x \, \hat{V}_{E} (x)\otimes \ket{x}\!_C\!\bra{x} \, ,
\label{malabarba}
\end{align}
where $\ket{x}\!_C\!\bra{x}$ projects over the different position states of the clock and $ \hat{V}_{E} \in \mathcal{H}_E$. We choose as a clock a one-dimensional bosonic field with free Hamiltonian 
\begin{align}
\hat{H}_C = \int \! d k \, \omega_k \hat{a}_k^\dag \hat{a}_k \, .
\label{eq:freeH}
\end{align}
In the position space, the initial state of the clock is a pulse described by 
\begin{align}
\ket{\psi_C}^{(0)} = \int dx \, \xi_{x_0}(x) \, \mathcal{X}(\Aod{x}) \, \ket{\rm{vac}} \equiv \hat{\psi}^{\dag}_C (x_0) \ket{\rm{vac}}\,,
\label{eq:initial_clock}
\end{align}
where $\xi_{x_0}(x)$ is the pulse envelope function centered on $x_0$, with velocity $c=1$ and $\mathcal{X}$ is an operator acting on the Fourier-transformed ladder operators of the field $\Aod{x}$ (e.g.~displacement or squeezing). We assume that the quasimonochromatic approximation \cite{PhysRevA.86.013811}, i.e.~the spectral width is much smaller than the average frequency of the wave-packet, holds. 
In the following, we will assume a linear dispersion relation $\omega_k = v_g k$ in the free Hamiltonian \eq\eqref{eq:freeH}. These assumptions correspond to taking as our clock a single-photon pulse traveling in a vacuum or a particle with a very narrow distribution in momentum.
Additionally, we assume a Gaussian envelope with frequency bandwidth $\Omega$
\begin{align}
\xi(x) = \left(\frac{\Omega^2}{2\pi}\right)^{1/4} e^{-\Omega^2 (x-x_0)^2/4}\,,
\label{gaussian_shape}
\end{align}
with $\mathcal{X} = \mathcal{I}$, where $\mathcal{I}$ is the identity map, i.e.~a single-particle pulse.

We can distinguish between two different times only when the pulse has moved by an amount that is larger than its width (in \fig\ref{fig_0} there's an example of this process). 
This leads naturally to a first discretisation of the interaction defined by a time window of size $W$ and discretising $\hat{V}_E(s)$ on them.
In light of this intuition we make the following replacements in \eq\eqref{malabarba}:
\begin{align}
\hat{V}_E (s) &\rightarrow \sum_i \hat{V}_E(s_i) \Theta_W(s-s_i)\,, \label{eq:wpot}\\
|s \ra_C\la s | &\rightarrow \frac{1}{W}\int_{\mathcal{D}_s} dx ~ \Aod{x} \Ao{x}\,,\label{eq:wproj}
\end{align}
where $\Theta(s-s_i)$ is the rectangle function of size $W$ centered on the time $s_i$, which can be chosen among the times within the $i$th window. We denote with $\mathcal{D}_q$ the integration range corresponding to a time window centered on the value $q$.
Let's assume that $s_i$ corresponds to the centre of the window.  
In the second equation the measure of position is defined, i.e.~we integrate the number of excitations per length over the region corresponding to $s$.
More detailed information about $W$ and the discrete potential $\hat{V}(s_i)$ will come in subsequent sections. 

Plugging Eqs. \eqref{eq:wpot} and \eqref{eq:wproj} into Eq.\eqref{malabarba}, the interaction term can be expressed as
\begin{align}
\hat{V}_{EC} \rightarrow &\int_{\mathbb{R}} ds \sum_i \hat{V}_e(s_i) \Theta_W(s-s_i) \otimes\frac{1}{W} \int_{\mathcal{D}_s} dx ~ \Aod{x} \Ao{x} \notag\\&=
\frac{1}{W} \sum_i   \hat{V}_e(s_i) \otimes  \int_{\mathbb{R}} ds \Theta_W(s-s_i) \int_{\mathcal{D}_s} dx ~ \Aod{x} \Ao{x} \notag\\&=
\frac{1}{W} \sum_i   \hat{V}_e(s_i) \otimes  \int_{\mathcal{D}_i} dx ~ \Aod{x} \Ao{x}  
\,,
\label{eq:hamiltonian_pre_int}
\end{align}
where $\mathcal{D}_i$ is now the region including $s_i$ (following the previous assumption it is $[s_i-W/2,s_i+W/2]$).

In the interaction picture with respect to the free Hamiltonian of the clock and the engine, we have
\begin{align}
\hat{V}_{EC}^I(t)  =
\frac{1}{W} \sum_i   \hat{V}^I_E(s_i) \otimes  \int_{\mathcal{D}_i} dx ~ \Aod{x-t} \Ao{x-t} 
\,,
\label{coupling}
\end{align}
with $\hat{V}^I_E(s_i)=e^{i \hat{H}_E t} \hat{V}_E(s_i) e^{-i \hat{H}_E t}$ (we omit the time dependence)
and the total propagator is given by
\begin{align}
\hat{\mathcal{U}}_{t,0} = e^{-i (\hat{H}_E + \hat{H}_C) t } \mathcal{T}\exp\left \{-i \int_{0}^{t}~d\tau \hat{V}^I_{EC}(\tau) \right\}\,, 
\label{prop}
\end{align}
where $\mathcal{T}$ is the time ordering operator. 

We now discretise the time axis into shorter intervals $t = n \Delta t$ with $n\in\mathbb{N}$ and the time step $\Delta t\ll W$. Thus the propagator in \eq\eqref{prop} is decomposed as 
\begin{align}
\hat{\mathcal{U}}_{t,0} &= \prod_n \hat{\mathcal{U}}_n \notag \\&= \prod_n e^{-i (\hat{H}_E + \hat{H}_C) \Delta t } \mathcal{T}\exp\left \{-i \int_{t_{n-1}}^{t_{n}}~d\tau \hat{V}^I_{EC}(\tau) \right\}\,.
\label{prop_n}
\end{align}
In the limit of $\Delta t \rightarrow 0$
we can approximate the propagator \eqref{prop_n} through the first order of the Magnus expansion \cite{Magnus1954}:

\begin{align}
\mathcal{H}_n^{(0)} &= \frac{1}{\Delta t} \frac{1}{W} \sum_i   \hat{V}_e^I(s_i) \otimes  \int_{\mathcal{D}_i} dx ~  \int_{t_{n-1}}^{t_{n}} ~d\tau \Aod{x-\tau} \Ao{x-\tau}
\notag\\&=
\frac{1}{W} \sum_i  \hat{V}^I_e(s_i) \otimes  \int_{t_{i}-W/2-t_n}^{t_{i}+W/2-t_n} dx ~ \delta \hat{n}_{x}
\notag
\\&=
\frac{1}{W} \sum_i  \hat{V}^I_e(s_i) \otimes \hat{N}_{W(i+1/2) -n} \,,
\end{align}
where $\delta\hat{n}_\alpha:=\frac{1}{\Delta t} \int_{t_{n-1}}^{t_{n}} ~d\tau \Aod{\alpha-\tau} \Ao{\alpha-\tau}$ and $\hat{N}_{W(i+1/2) -n} := \int_{t_{i}-W/2-t_n}^{t_{i}+W/2-t_n} dx ~ \delta \hat{n}_{x}$.
Thus we obtain the discrete-time propagator
\begin{align}
\hat{\mathcal{U}}_n \simeq e^{-i (\hat{H}_e + \hat{H}_c) \Delta t } \exp\left\{-\frac{i \Delta t }{W} \sum_i  \hat{V}^I_e(s_i) \otimes \hat{N}_{W(i+1/2) -n} \right\}\,.
\label{final_prop}
\end{align}

In this picture, the time evolution of the joint engine-clock system can be interpreted as a sequence of repeated interactions between the engine and a set of localized spatial modes lying within a window $W$. These modes effectively act as ancillae in a QCM, where each interaction momentarily couples the engine to a different part of the environment, thereby inducing a time-dependent evolution governed by the particle's position along the line.

Note that in our approximation we used the operator in Eq. \eqref{eq:wproj} instead of the field operator $\hat{\psi}^{\dag}_c (s) \hat{\psi}_c (s)$ to avoid biasing by the shape of the pulse.
None of the previous assumptions prevent pulse broadening, as it depends on the clock-engine interaction Hamiltonian. Note that unlike the ideal scenario proposed in \cite{Malabarba_2015}, we cannot decompose the propagator in Eq.\eqref{final_prop} into two separate operators for the clock and the engine. Consequently, the engine and the clock will exhibit correlations that grow over time. This back-reaction affects the clock's state and provides the physical basis for degradation.

\section{Clock degradation and dispersion law}
\label{sec:degr}
For linear dispersion relations $\omega_k =c k$, the free evolution of the clock ladder operators in the frequency domain reads 
\begin{align}
e^{i \hat{H}_C  t} \hat{a}_k e^{-i \hat{H}_C t} = \hat{a}_k e^{-i k \chi {\rm \sgn}(k)}\,,
\label{trasf_ops}
\end{align}
where $\chi=c t$.
As a consequence, for Gaussian envelope and under the quasimonochromatic approximation, the initial state of the clock \eqref{eq:initial_clock} transforms as
\begin{align}
\ket{\psi_C}^{(t)}
&= \int dx \int dk \, \xi_{k-k_0} e^{-i k (x-\chi {\rm \sgn}(k))} \Aod{x} \, \ket{\rm{vac}} \notag
\\&= \int dx\, \xi_{x-\chi} \Aod{x} \, \ket{\rm{vac}} \,,
\label{eq:evolved_clock}
\end{align}
i.e.~the free Hamiltonian of the clock only translates the state of the field.
It becomes evident that when the clock evolves exclusively under this Hamiltonian with the assumption of a linear dispersion, it can behave as an ideal clock in the sense of \cite{Woods_2018}. Consequently, we can explore the potential for degradation by examining the emergence of nonlinearity in the dispersion law. Starting from \eq\eqref{eq:hamiltonian_pre_int} we have, in the Fourier space,
\begin{align}
\hat{H}_C + \hat{V}_{EC} &=  \int dk v_g k \hat{a}^\dag_{k} \hat{a}_{k}  \notag
\\& 
+  \int dk dk' \mathcal{F}[\hat{V}_E](k-k')  \otimes \hat{a}^\dag_{k} \hat{a}_{k'} \,,
\label{eq:effective_int_clock}
\end{align}
where $\mathcal{F}[\hat{V}_E](k) = \int dx \hat{V}_E(x) e^{- i k x} $ denotes the Fourier transform of the engine Hamiltonian.
Even in the case in which $\la\mathcal{F}[\hat{V}_E]\ra_E \propto f(k,t) \delta(k-k')$, the interaction term still introduces a complex and non-trivial deviation from linearity into the dispersion relation. One of the consequences of the non-linearity of the dispersion relation is wave-packet broadening \cite{DIELS20061} \footnote{We are not considering Airy wavepackets \cite{berry1979nonspreading}, which are well-known for being the only nonspreading-wavepacket solutions to Schrödinger's equation. Nonetheless, it's important to note that Airy wavepackets exhibit acceleration, making them unsuitable as candidates for position-based timekeeping.}
.
When the wave-packet width becomes comparable to $W$, from the point of view of the engine's dynamics, we are not able to select which of the transformations is the one we must implement to reproduce the target dynamics: in other words, \textit{we do not know what time it is}. We interpret this phenomenon as a manifestation of clock degradation arising from the interaction with the engine. This interaction leads to the emergence of an effective mass $\propto (\partial^2_k \omega_k)^{-1}$, even when the free clock field is originally massless. On the other hand, the coupling \eqref{eq:hamiltonian_pre_int} is a translation of \eqref{malabarba} for quantum fields, which in turn finds its origins in the broader concept of the ancilla-clock system initially introduced in \cite{horodecki2013fundamental}. Therefore, for clocks falling within this broad class, our findings provide direct evidence of the intrinsic connection between degradation and the effective mass of the system operating as a clock. An interesting possibility would be preparing a pulse with an initial negative effective mass, for example by engineering the dispersion to mimic that of a particle in an inverted harmonic potential. The idea would be to counteract the induced degradation, interpreted here as the pulse acquiring a positive effective mass. The implications of such a scenario will be discussed in the next section.

\section{Time operators}
\label{sec:time_ops}
In this section, we introduce a time operator formalism that allows for a comprehensive characterization of the clock’s performance.
Specifically, our goal is to describe clock degradation at a fundamental level—namely, as the combined effect of two main contributions: the intrinsic non-idealities of the system used as the clock (referred to as \textit{self-degradation}), and the correlations arising from its interaction with an engine.
These two contributions, which we analyze in this section, form the fundamental ground upon which any additional error, such those arising from deviations from the Gaussian profile of the wave packet or the choice of the time window $W$, must be added.

Inspired by the functional form of the Hamiltonian in Eq.\eqref{eq:hamiltonian_pre_int}, we define a continuous clock time operator 
\begin{align}
\hat{T} = \frac{1}{c} \int dx \, x \, \hat{a}_x^\dag \hat{a}_x\,,
\end{align}
where $c$ is the speed of light in the vacuum. It is interesting to note that given any continuous dispersion relation $\omega(k)$ for the clock system, we can find an expression for the corresponding \eq\eqref{eq:pauli_argument}, as shown by the following
\begin{theorem}
    Given the time operator defined above, and given a continuous dispersion relation $\omega(k)$, the commutator with $\hat{H}_C$ reads
    $$ [\hat{T},\hat{H}_C] = i \hat{\Lambda} $$
    where 
    \begin{equation}
        \hat{\Lambda}=\frac{1}{c}\int dk\, \omega'(k)  \hat{a}_k^\dagger \hat{a}_k\,, 
    \end{equation} 
    with $\omega'(k) = d\omega(k)/d k$. 
\end{theorem}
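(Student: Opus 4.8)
\emph{Proof plan.} The plan is to pass to momentum space, where the factor of $x$ in $\hat{T}$ becomes a derivative with respect to $k$, thereby reducing the statement to a routine manipulation of the canonical commutation relations $[\hat{a}_k,\hat{a}_{k'}^\dagger]=\delta(k-k')$.

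First I would fix the Fourier convention consistent with the one used in the text for $\mathcal{F}[\hat{V}_E]$, writing $\hat{a}_x=\tfrac{1}{\sqrt{2\pi}}\int dk\, e^{ikx}\hat{a}_k$, and substitute it into $\hat{T}=\tfrac{1}{c}\int dx\, x\,\hat{a}_x^\dagger\hat{a}_x$. Using $x\,e^{iux}=-i\,\partial_u e^{iux}$ together with $\int dx\, e^{i(k'-k)x}=2\pi\delta(k'-k)$ and one integration by parts in $k'$ (the boundary term vanishing on normalizable, spectrally localized clock states), this yields the momentum-space representation
\[
\hat{T}=\frac{i}{c}\int dk\,\hat{a}_k^\dagger\,\partial_k\hat{a}_k\,,
\]
which is manifestly Hermitian after a further integration by parts, and which is the convenient form for the rest of the computation.

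Next I would compute $[\hat{T},\hat{H}_C]$ directly, with $\hat{H}_C=\int dq\,\omega(q)\,\hat{a}_q^\dagger\hat{a}_q$ the free clock Hamiltonian \eqref{eq:freeH} now carrying a general continuous dispersion. Writing $\hat{b}_k:=\partial_k\hat{a}_k$ and using $[\hat{a}_k,\hat{a}_q^\dagger\hat{a}_q]=\delta(k-q)\hat{a}_q$, $[\hat{a}_k^\dagger,\hat{a}_q^\dagger\hat{a}_q]=-\delta(k-q)\hat{a}_k^\dagger$ and $[\hat{b}_k,\hat{a}_q^\dagger\hat{a}_q]=\partial_k\!\big(\delta(k-q)\hat{a}_q\big)=\delta'(k-q)\hat{a}_q$, the bilinear commutator identity gives
\[
[\hat{a}_k^\dagger\hat{b}_k,\hat{a}_q^\dagger\hat{a}_q]=\delta'(k-q)\,\hat{a}_k^\dagger\hat{a}_q-\delta(k-q)\,\hat{a}_k^\dagger\hat{b}_k\,.
\]
Integrating this against $\tfrac{i}{c}\,\omega(q)$ over $k$ and $q$, the $\delta'$-term is handled by $\delta'(k-q)=-\partial_q\delta(k-q)$ and a last integration by parts in $q$, which turns it into $\partial_k\!\left(\omega(k)\hat{a}_k\right)=\omega'(k)\hat{a}_k+\omega(k)\hat{b}_k$. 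The $\omega(k)\hat{b}_k$ piece cancels exactly against the contribution of the second term above, leaving
\[
[\hat{T},\hat{H}_C]=\frac{i}{c}\int dk\,\omega'(k)\,\hat{a}_k^\dagger\hat{a}_k=i\hat{\Lambda}\,,
\]
which is the assertion.

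The computation is essentially bookkeeping; the only points requiring care are (i) the consistent use of the distributional identities $\int dx\,e^{iux}=2\pi\delta(u)$ and $\delta'(k-q)=-\partial_q\delta(k-q)$, and (ii) the discarding of boundary terms in the $k$- and $q$-integrations, which is legitimate on the physically relevant (square-integrable, narrow-band) clock states but is the place where a fully rigorous statement would need to specify operator domains. I do not expect any conceptual obstacle beyond this, so the actual work reduces to carrying out the two integrations by parts cleanly.
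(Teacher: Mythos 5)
Your proposal is correct and follows essentially the same route as the paper: both pass to the momentum-space representation $\hat{T}=\tfrac{i}{c}\int dk\,\hat{a}_k^\dagger\partial_k\hat{a}_k$ via the distributional identity for $\int dx\,x\,e^{iux}$ and then evaluate the commutator with $\hat{H}_C=\int dk\,\omega(k)\hat{a}_k^\dagger\hat{a}_k$ using the canonical commutation relations. The only cosmetic difference is that the paper writes $\partial_k\hat{a}_k$ as a finite-difference limit in $\epsilon$ and invokes continuity of $\omega$ to evaluate the difference quotient, whereas you keep $\delta'$ and integrate by parts; the two bookkeeping schemes are equivalent.
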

\begin{proof}
    See Appendix \ref{app:time}.
\end{proof}
The theorem above implies the Heisenberg equation of motion
\begin{align}
  \dot{\hat{T}}_H(t) = i [\hat{T}_H(t),\hat{H}_C] = \int dk \, \omega'(k)\hat{a}_k^\dag \hat{a}_k = \hat{\Lambda}\,,
\end{align}
which is generally far off from the ideal case of \eq\eqref{eq:pauli_argument}, and the clock states will in general spread, a phenomenon that causes superpositions of different time states and reduces the resourcefulness of the clock system. Having introduced a time operator, we are in the position of defining an associated quantifier of degradation by means of the variance of the operator $\hat{T}$:
\begin{definition}
\label{def:degradation}
    Given a time operator $\hat{T}$ and an initial clock state $\ket{\psi_C}$, the degradation $D(t)$ of the clock is defined as 
    $$ D(t) = \sqrt{\text{Var}(\hat{T})} = \left(\langle \hat{T}^2\rangle - \langle \hat{T} \rangle ^2\right)^{\frac{1}{2}} \,. $$
\end{definition}
Given a dispersion relation and its corresponding modified Heisenberg algebra $[\hat{T},\hat{H}_C]=i\hat{\Lambda}$, we find that under its own clock Hamiltonian, the time dependency of $D(t)$ can be extracted: it is linear for any (continuous) choice of dispersion relation $\omega(k)$, and proportional to the variance of the operator $\hat{\Lambda}$:
\begin{theorem}[Clock self-degradation]
\label{th:self}
Given a dispersion relation $\omega(k)$, and its corresponding modified commutator $[\hat{T},\hat{H}_C]=i\hat{\Lambda}$, the clock self-degradation obeys
    $$  D(t) = \sqrt{\rm{Var} (\hat{\Lambda})} ~t$$
\end{theorem}
\begin{proof}
    See Appendix \ref{app:time}.
\end{proof}
The theorem above suggests that it might always be possible to completely eliminate the problem of self-degradation at all future times by preparing initial clock states such that $\text{Var}(\hat{\Lambda})=0$, i.e. eigenstates of $\hat{\Lambda}$. Interestingly enough, under fairly general assumptions this possibility is ruled out, except in the case of strictly linear dispersion relation, as shown by the following
\begin{theorem}
    Given a single-particle clock state $\ket{\psi_C}$, whose support in $k$-space is defined by a compact interval $\Omega\in\mathbb{R}$ with non-zero length, and given a continuous and injective dispersion relation $\omega(k):\Omega\to\mathbb{R}$, one has $D(t)=0$ $\forall t$ if and only if $\omega(k)$ is linear in the whole support $\Omega$.
\end{theorem}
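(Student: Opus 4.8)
The plan is to reduce the statement, via the theorem $D(t)=\sqrt{\mathrm{Var}(\hat\Lambda)}\,t$ proved above, to an elementary fact about the variance of $\omega'$ against the momentum distribution of the clock state, and then to use the support hypothesis to turn an ``almost everywhere'' identity into an identity on all of $\Omega$. First I would write the single-particle clock state as $\ket{\psi_C}=\int_\Omega dk\,\phi(k)\,\hat a_k^\dagger\ket{\mathrm{vac}}$ with $\int_\Omega|\phi(k)|^2\,dk=1$ and support exactly $\Omega$. Since on the one-excitation sector $\hat\Lambda=\tfrac1c\int dk\,\omega'(k)\,\hat a_k^\dagger\hat a_k$ acts as multiplication by $\omega'(k)/c$, one has $\langle\hat\Lambda\rangle=\tfrac1c\int_\Omega\omega'(k)\,|\phi(k)|^2\,dk$ and $\langle\hat\Lambda^2\rangle=\tfrac1{c^2}\int_\Omega\omega'(k)^2\,|\phi(k)|^2\,dk$, hence $\mathrm{Var}(\hat\Lambda)=c^{-2}\,\mathrm{Var}_\mu(\omega')$ with $d\mu(k):=|\phi(k)|^2\,dk$ a probability measure on $\Omega$. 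By the theorem above, $D(t)=0$ for all $t$ is equivalent to $\mathrm{Var}(\hat\Lambda)=0$, so the claim becomes: $\mathrm{Var}_\mu(\omega')=0$ if and only if $\omega$ is linear on $\Omega$.

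The ``if'' direction is immediate: if $\omega$ is affine on $\Omega$ then $\omega'$ is constant there, so $\mathrm{Var}_\mu(\omega')=0$ for every single-particle state supported in $\Omega$, and therefore $D(t)\equiv 0$.

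For the converse, $\mathrm{Var}_\mu(\omega')=0$ forces $\omega'(k)=c_0$ for $\mu$-almost every $k\in\Omega$, for some constant $c_0$. The crucial, and only nontrivial, step is to upgrade this to $\omega'(k)=c_0$ for \emph{all} $k\in\Omega$, and this is exactly where the hypotheses ``$\Omega$ is the support of $\ket{\psi_C}$'' and ``$\Omega$ has non-zero length'' enter: because $\Omega$ is precisely the support of $\phi$, the measure $\mu$ gives strictly positive mass to every subinterval of $\Omega$ of positive length, so the full-$\mu$-measure level set $\{k\in\Omega:\omega'(k)=c_0\}$ must be dense in $\Omega$; then, using the regularity of $\omega$ that is already implicit in writing $\hat\Lambda$ (namely $\omega\in C^1$, so that $\omega'$ is continuous), a dense level set of the continuous function $\omega'$ is all of $\Omega$, whence $\omega'\equiv c_0$ and $\omega(k)=\omega(a)+c_0(k-a)$ on $\Omega=[a,b]$. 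The injectivity assumption is not used in this chain; it only guarantees that the resulting line is nondegenerate ($c_0\neq 0$), i.e.\ that $\omega$ genuinely defines a clock rather than the trivial constant dispersion. I expect the main obstacle to be precisely this promotion from an almost-everywhere statement to an everywhere statement: it must invoke both the support-with-nonzero-length property of the clock state and the minimal smoothness of $\omega$ needed for $\hat\Lambda$ to be well defined, and one should check whether ``continuous'' in the hypothesis is really enough or whether, as the definition of $\hat\Lambda$ suggests, a $C^1$ (or absolutely continuous) dispersion is what is implicitly meant.
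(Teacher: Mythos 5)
Your proof is correct, and for the nontrivial (``only if'') direction it takes a genuinely different route from the paper's. Both arguments start from the same reduction, $D(t)=0\;\forall t \Leftrightarrow \mathrm{Var}(\hat\Lambda)=0$, and both must conclude that $\omega'$ is constant on $\Omega$. The paper gets there operator-theoretically: $\ket{\psi_C}$ is an eigenstate of $\hat\Lambda$; by injectivity of $\omega$ and the non-zero length of $\Omega$ it cannot be an eigenstate of $\hat H_C$; since $[\hat\Lambda,\hat H_C]=0$, the operator $\hat\Lambda$ must be degenerate on all of $\Omega$, i.e.\ $\omega'$ is constant there. You instead compute $\mathrm{Var}(\hat\Lambda)$ explicitly as the classical variance of $\omega'/c$ against the momentum distribution $|\phi(k)|^2\,dk$, deduce $\omega'=c_0$ almost everywhere, and then upgrade to everywhere using the fact that the support gives positive mass to every subinterval together with continuity of $\omega'$. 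Your version makes explicit the almost-everywhere-to-everywhere step that the paper's phrase ``degenerate on $\Omega$'' silently absorbs, and it correctly isolates where each hypothesis enters: the support/non-zero-length condition does the real work, while injectivity only excludes the degenerate constant dispersion (in the paper it is instead used to rule out $\ket{\psi_C}$ being an $\hat H_C$-eigenstate). Your caveat that the stated hypothesis ``$\omega$ continuous'' should really be ``$\omega\in C^1$'' (or at least $\omega'$ continuous) for the density argument to close is well taken; the paper's proof has the same implicit requirement, since $\hat\Lambda$ is defined through $\omega'$, so this is a gap in the theorem's phrasing rather than in either proof.
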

\begin{proof}
    See Appendix \ref{app:time}.   
\end{proof}
As a corollary to the theorem above, we can consider the limit case in which our initial clock state is extremely well localised in a spatial window of width $W$. Since good clock states must be sufficiently localised in position, 
they must necessarily be sufficiently delocalised in momentum and therefore the support $\Omega$ in the theorem above becomes the whole real line as $W\to 0$, forcing the dispersion relation to be linear everywhere. \\
We can now investigate what happens to \eq\eqref{eq:pauli_argument} in the case of a linear dispersion relation $\omega(k)=ck$. Following the results above one has
\begin{align}
    \hat{\Lambda}=\int dk \, \hat{a}_k^\dag \hat{a}_k=\hat{N}_C\,,
\end{align}
where $\hat{N}_C$ counts the excitations of the field. Remarkably, this is the unique case in which we can construct resourceful clock pulses that have $\text{Var}(\hat{\Lambda})=0$ but are not eigenstates of the clock Hamiltonian (this renders them ineffective as clocks due to their stationarity).
The initial state of the clock \eq\eqref{eq:initial_clock} has a fixed number of excitations $N$, and furthermore such number of excitations is exactly conserved during the dynamics due to the fact that $[\hat{N}_C,\hat{H}]=0$. Thus, within our assumptions, when starting from a clock state with well-defined particle number $\hat{N}_C\ket{\psi_C}=N\ket{\psi_C}$, we can always define a projector $\hat{\mathbb{P}}_N$ onto the $N$-particle sector and the rescaled time operator 
\begin{equation}
    \hat{\tau} = \frac{1}{N}\hat{T} 
\end{equation}
such that
\begin{align}
 \hat{\mathbb{P}}_N(\dot{\hat{\tau}} -\mathbb{1}_N )\hat{\mathbb{P}}_N = 0\,,
\end{align}
that corresponds to the original Pauli relation in the $N$ excitation subspace \footnote{The rescaling procedure outlined here has a classical counterpart. The angular velocity of the arm is the same in all the classical watches and is achieved by rescaling the tangential speed of the arm's tip with the length of the arm.}.
In other words, we found that \textit{any} wave-packet with a fixed total number of excitations traveling through a medium with linear dispersion law (i.e.~any massless wavepacket) behaves as an ideal clock, in the sense of \cite{Malabarba_2015}. 

Now, let us assume that our clock is coupled to an engine $E$. Contrary to what we had before, the clock now does not evolve unitarily, but as a sum of unitaries: 
\begin{lemma}
    \label{lem:0}
    The state of the clock at a time $t$, when coupled to an engine is given by 
    $$\rho_C(t) = {\rm Tr}\rho_{CE}(t) = \sum_j p_j \hat{\mathcal{U}}_j |\psi_C\rangle\langle\psi_C| \hat{\mathcal{U}}_j^\dagger,$$
    where $\hat{\mathcal{U}}_j = e^{-i(\hat{H}_C +  g\hat{W}^{(j)})t}$ is the unitary operator that accounts for the  engine's effect on the clock Hamiltonian, represented here by $\hat{W}^{(j)}t$ and $g$ is the coupling strength between clock and engine.
\end{lemma}
\begin{proof}
    See Appendix \ref{app:time}. 
\end{proof}
\begin{lemma}
\label{lem:1}
In the open system scenario, the expectation value of the time operator acquires an additional contribution due to the interaction with the engine. Specifically, up to second order terms in $g$
    \begin{equation*}
        \langle \hat{T}\rangle = \langle \hat{\Lambda}\rangle\, t + g\,\langle \hat{V}(t)\rangle,  
    \end{equation*}
    where $\langle \hat{V}(t)\rangle = i\,\langle [ \hat{\mathcal{W}}(t), \hat{T}_H(t) ]\rangle$, and $\hat{\mathcal{W}}(t) := \int_{0}^t d\tau\, \sum_j p_j\,\,\hat{W}^{(j)}_H(-\tau)$.
\end{lemma}
\begin{proof}
    See Appendix \ref{app:time}. 
\end{proof}
\begin{theorem}[Clock degradation]
Consider a quantum clock with free Hamiltonian $H_C$, coupled to an engine with Hamiltonian $H_E$, with coupling strength $g$. For sufficiently small $g$, i.e. weak-coupling limit, up to second order terms in $g$, the degradation of the clock is
    $$D(t) = \sqrt{{\rm Var}(\hat{\Lambda})t^2 + 2 \,g \,{\rm Cov}(\hat{V}(t), \hat{T}_H(t))}.$$
\end{theorem}
\begin{proof}
    See Appendix \ref{app:time}. 
\end{proof}
As highlighted by the theorem above, eliminating clock degradation entirely is fundamentally impossible. More precisely, starting with a non-trivial dispersion relation, associated with a negative effective mass, it might be possible to end up with a linear dispersion, eliminating the contribution of the first term. The second term encodes the correlation build-up between clock and engine, and as such it cannot be compensated by this effect.
\\The trade-off between degradation and resourcefulness of the clock states is a consequence of the Heisenberg uncertainty relation for the clock's position and momentum, and can then be translated into a lower-bound: 
\begin{align}
    \Delta \hat{H}\Delta \hat{T}_H(t)&\geq \frac{1}{2}|\langle [\hat{H}, \hat{T}_H(t)]\rangle|\\
    &=\frac{1}{2}|\langle \dot{\hat{T}}_H(t)\rangle|
\end{align} 
\\Without loss of generality, we can decompose $T_H(t)$ into two parts : one that describes the non-interacting evolution of the clock and another that accounts for the deviations induced by the interaction with the engine. The time derivative of the operator then takes the form $\dot{\hat{T}}_H = \hat{\Lambda} + \delta \hat{\Lambda} (t)$, where the first and second term capture the time-independent and the time dependent corrections due to the engine, respectively. The uncertainty relation becomes 
\begin{equation}
   \Delta \hat{H} \Delta \hat{T}_H(t) \geq \frac{1}{2} |\langle \hat{\Lambda} + \delta \hat{\Lambda} (t)\rangle|. 
\end{equation}
This allows us to write a lower bound for the degradation of the clock as 
\begin{equation}
    D(t) \geq \frac{|\langle \hat{\Lambda}\rangle + \langle\delta \hat{\Lambda} (t)\rangle|}{2 \Delta \hat{H}}
\end{equation}
where $\Delta \hat{H}$ is the variance of the total Hamiltonian $\hat{H}$, specified in Eq.\eqref{totalH}. This follows directly from the general Robertson-Schrödinger uncertainty relation associated to the pair of noncommuting observables $\hat{T}$ and $\hat{H}$ \cite{PhysRev.34.163}.
Considering an explicit dependence with the engine Hamiltonian, it follows that $\langle T_H(t) \rangle = \langle T \rangle$, and from Lemma \ref{lem:1} we have 
\begin{equation}
    \langle \delta\hat{\Lambda}(t) \rangle = g\,\langle \dot{\hat{V}}(t) \rangle,  
\end{equation}
where $\langle \dot{\hat{V}}(t) \rangle = \frac{d}{dt}\langle \hat{V}(t) \rangle$. The uncertainty relation then becomes 
\begin{equation}
    D(t) \geq \frac{|\langle \hat{\Lambda}\rangle + g\, \langle \dot{\hat{V}}(t)\rangle|}{2 \Delta \hat{H}}.  
\end{equation}
The bound quantifies a fundamental limitation of the degradation of the clock. Note that $\langle \hat{\Lambda}\rangle$ characterizes the linear ticking of the clock in the isolated scenario, while $\langle \dot{\hat{V}}(t)\rangle$ accounts for deviations induced by the interaction with the engine that manifests as an additional source of degradation. This term captures the fact that using a clock comes with a thermodynamic price, correlations and backaction disturb the clock's evolution. The engine-induced dynamics introduce nonlinearity into the clock’s evolution, effectively reducing its temporal precision. However, the energy uncertainty sets the scale of this trade-off. Note that $\Delta \hat{H}$ encodes the variance in energy, which set a fundamental constraint on how precisely the clock can resolve time. A higher energy uncertainty allows, in principle, for finer temporal resolution, thereby tolerating greater dynamical changes before significant degradation occurs. It is important to note that the term $\langle \dot{\hat{V}}(t)\rangle$ need not be strictly positive. In certain regimes, the interaction with the engine may transiently reduce the degradation rate, or even partially reverse it. This could occur, for example, if the coupling induces correlations or coherence that stabilize the clock’s evolution. However, such scenarios are typically non-generic and require an exotic fine-tuned dynamics. 

\section{Conclusions}

Our formalism, built upon a minimal yet well-defined and comprehensive physical framework, proves effective in capturing the complex nature of timekeeping and the deviations of a clock's behavior from ideal performance. A noteworthy connection to \cite{Woods_2018} arises when considering the dimensionality of the clock system. In our approach, the clock is modeled as a field, i.e., a continuous-variable system with an infinite-dimensional Hilbert space as support. This naturally rises the expectation that such clocks might exhibit near-ideal behavior, especially given that the errors discussed in \cite{Woods_2018} decay exponentially with the Hilbert space dimension $d$. However, a crucial assumption in those bounds is that the clock is prepared in a state with an energy spread $\sigma\sim\sqrt{d}$ and a mean energy standing at the midpoint of the energy spectrum. For a harmonic oscillator, satisfying this condition would imply a state of infinite energy, thereby departing from the physically grounded conditions we consider here.
We identify a fundamental form of clock degradation—that is, independent of practical limitations such as the preparation of high-fidelity clock states or the implementation of a well-defined clock-position measurement protocol—and decompose it into two primary contributions.
The first one, the self-degradation, arises from the clock’s own Hamiltonian dynamics and manifests as pulse spreading, allowing us to draw a clear connection between the degradation of a quantum clock and its mass in the low-energy regime. This effect can be mitigated by maintaining a linear dispersion relation in the medium, where rescaling mechanisms act to prevent the degradation from becoming unbounded. Notably, the link between a clock’s mass and its capacity to keep time has also been explored experimentally \cite{clock_mass_exp}.
The second contribution to degradation is induced by the interaction between the clock and the engine, and is strongly influenced by the correlations between them. Together, these effects give rise to a non-trivial connection between timekeeping precision and the system’s energy uncertainty. This is captured by an uncertainty relation that imposes a fundamental lower bound on degradation, thereby setting a limit to the achievable precision of real clocks.
Interestingly, the interplay between these two fundamental sources of degradation may give rise to unconventional behaviors, including the possibility of suppressing or even reversing degradation—phenomena typically associated with nonlinear systems. 
Our formalism could be used in the future to explore such scenarios, including moving-particle clocks within the framework of nonlinear quantum mechanics, such as soliton-based models.
These systems may uncover regimes where degradation is not only reduced, but actively controlled or reversed, opening up new possibilities for improving timekeeping precision in quantum technologies.
\\

\section*{Acknowledgements}
We thank M. P. Woods for his detailed comments and the many fruitful discussions that followed.
This work was supported by the BMBF project PhoQuant (Grant No. 13N16110) and the Quantera project ExTRaQT (Grant No. 499241080).

\appendix

\section{On the form of the interaction}
\label{sec:int_model}
The choice of the window $W$ and the interaction is crucial for the scenario we want to describe. We operate under the assumption that the interaction in \eq\eqref{coupling} does not allow energy exchange between the engine and the clock. Thus apart from translation of the clock states, the pulse may suffer \textit{degradation}, i.e. pulse broadening in time. The most critical parameter for this timekeeping approach based on the position is then the width of the wave-packet. 
One possibility can be choosing $W$ in order to include all the pulse envelope ($ W \gg \Omega^{-1}$ for the Gaussian pulse \eq\eqref{gaussian_shape}, as depicted in \fig\ref{fig_0})
and
\begin{align}
\hat{V}^I_e(s_i)=
\begin{cases}
\hat{v}_i \, ,{\rm for}~i~{\rm even}\,;\\
\mathbb{1}\,,{\rm for}~i~{\rm odd}\,.
\end{cases}
\label{cond}
\end{align}
This choice guarantees that a specific Hamiltonian $v_i$ acts on the engine at a specified time window, at the price of a phase shift due to the identity. 
The presence of intermittent \textit{dead windows} naturally introduces the concept of a ``period" in our system, similar to mechanical clocks: the interval between two successive ticks corresponds to a phase of the evolution during which the oscillating component disengages from the surrounding mechanism (\fig\ref{fig_2}).
More precisely, for each $\Delta t$, the potential $v_i$ undergoes modulation proportional to the radiation flux over the $i$th time window. 
In this scenario the width of the pulse in terms of $\Delta t$ is crucial for the speed at which one can induce a certain $\hat{V}_e(t)$ on the engine.
\begin{figure}[htb!]
\includegraphics[scale=0.75,angle=0]{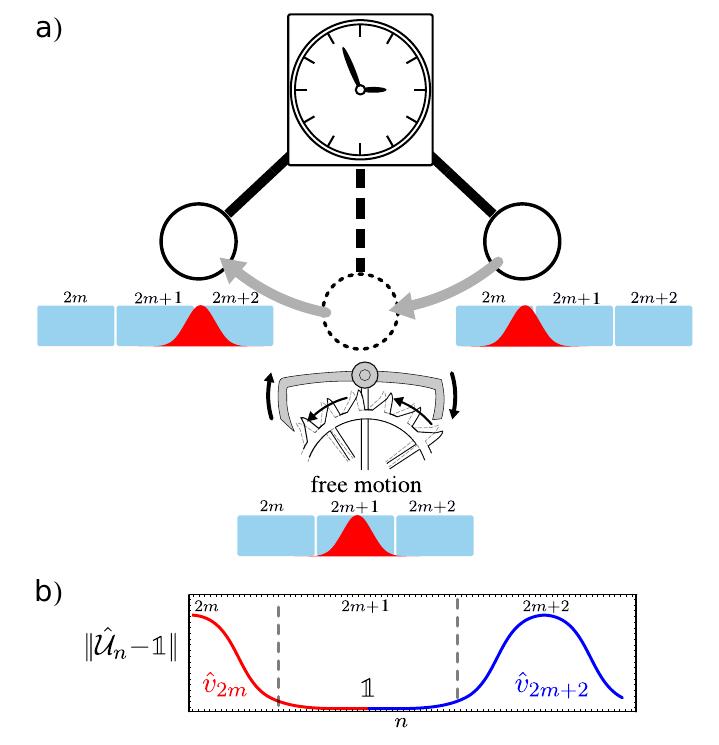}
\caption{(a) Mechanical analogue of the time windows scheme \eqref{cond}. During the time between two successive ticks, the ends of the anchor (rigidly connected to the pendulum's rod) don't push on the teeth of the escape wheel and both pendulum and mechanism move freely.
(b) Qualitative behaviour of the generator of the propagator \eqref{final_prop} under the scheme \eqref{cond}. The action of the identity during the odd time windows causes the generator undergoes a smooth transition from $\hat{v}_{2m}$ to $\hat{v}_{2m+2}$. The colors correspond to the generator that is acting in the considered time. The shape of the curve depends on the shape of the pulse's envelope.}
\label{fig_2}
\end{figure}
Let us consider $W(i+1/2) - n' \in [2m W, (2m+1)W]$, with $m \in \mathbb{N}$. In this case the pulse encompasses two time windows. According to \eqref{cond} and considering the action of $\hat{N}_\alpha$ on the state of the clock, the generator reads
\begin{align}
\frac{1}{W} \sum_i  &\hat{V}^I_e(s_i) \otimes \hat{N}_{W(i+1/2) -n'}  =\alpha_{n'}\hat{V}^I_e(s_{2m}) + \beta_{n'} \hat{V}^I_e(s_{2m+1})\notag
\\& =\alpha_{n'} \hat{v}_{2m} + \beta_{n'} \mathbb{1}\,,
\end{align}
where we define $\alpha_{n'} = \frac{1}{W}\la \hat{N}_{W(2m+1/2) -n'} \ra_c^{1/2}$ and $\beta_{n'} = \frac{1}{W}\la \hat{N}_{W(2m+1+1/2) -n'} \ra^{1/2}_c$. Thus, in this case,
\begin{align}
\hat{\mathcal{U}}_{n'} &= \exp\{-i (\alpha_{n'} \hat{v}_{2m} + \beta_{n'} \mathbb{1})\Delta t\} \notag\\&= \exp\{-i \alpha_{n'} \hat{v}_{2m} \Delta t\} \exp\{-i\beta_{n'} \mathbb{1}\Delta t\}\notag
\\&= \exp\{-i \alpha_{n'} \hat{v}_{2m} \Delta t\} e^{-i\beta_{n'} \Delta t}\,,
\end{align}
and the global phase factor can be neglected.
At each $n'$ the potential $\hat{v}_{2m}$ is weighted by the factor $\alpha_{n'}$ that ranges between $0$ (the pulse is crossing another window) and a maximum that depends on the normalisation of the pulse (when the pulse is exactly centered on the $2m$th time window).
This scheme allows us to approximate the action of the target time-dependent generator $V_e(t)$ on a mesh of time windows of size $W$. Nevertheless, a notable limitation is that each generator $V_e(s_i)$ demands an effective time frame of $2W$ for execution. This, in conjunction with the modulation, imposes significant constraints on the accuracy of the achieved transformation. Notably, as $W$ increases, our discrete series offers a closer approximation to the continuous generator. Such trade-offs, that arise from the need to bridge the gap between continuous and discrete representations, are a common challenge in the realm of approximating continuous-time systems.
Up to this point, we have not directly accounted for any clock degradation process. In the upcoming section, we explore the impact of clock degradation on the wave-packet dynamics, examining how the Hamiltonian in \eq\eqref{eq:hamiltonian_pre_int} affects also its shape. Understanding the underlying reasons for this degradation is crucial, given our reliance on the position of the particle in our time-keeping protocol.

\section{Proofs of the statements in section VI}
\label{app:time}
For the sake of clarity, in this appendix we will focus exclusively on the dynamics of the clock setting aside the explicit goal of implementing a particular transformation. Thus we consider $V(t)$ a continuous variable as well as in \cite{Malabarba_2015}, i.e.~we work in the limit of $W\rightarrow 0$.
Therefore the clock Hamiltonian keeps the form in \eq\eqref{eq:freeH} while the total Hamiltonian is turned into

\begin{equation}
    \hat{H} = \hat{H}_E\otimes \mathbb{1}_C + \mathbb{1}_E\otimes \hat{H}_C + \int dx\, \hat{V}_E(x) \otimes \hat{a}_x^\dagger \hat{a}_x\,.
\end{equation}
When considering the time operator
\begin{equation}
    \hat{T} = \frac{1}{c}\int dx \, x\, \hat{a}_x^\dagger \hat{a}_x\,,
\end{equation}
where $c$ is a reference speed, we are interested in the commutator $[\hat{T}_H(t),\hat{H}_C]$, where $\hat{T}_H(t)$ is the Heisenberg picture representation of $\hat{T}$, which will give us the equation of motion we are looking for. In order to compute the commutator with $\hat{H}_C$, we make use of the following
\begin{lemma}[$k$-basis representation of $\hat{T}$]
    $$ \hat{T} = \frac{i}{c} \lim_{\epsilon\to 0} \frac{1}{\epsilon}\int dk\big( \hat{a}_k^\dagger \hat{a}_{k+\epsilon} - \hat{a}^\dagger_k \hat{a}_k \big)$$
\end{lemma}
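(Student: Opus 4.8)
The plan is to pass from the position-space definition of $\hat{T}$ to its momentum representation by Fourier-transforming the field operators, and then to recognise the resulting object as a finite-difference quotient. First I would insert the mode expansion $\hat{a}_x = \frac{1}{\sqrt{2\pi}}\int dk\, e^{ikx}\hat{a}_k$ — the convention consistent with \eq\eqref{trasf_ops} and the linear dispersion of \eq\eqref{eq:freeH} — together with its adjoint into $\hat{T} = \frac{1}{c}\int dx\, x\, \hat{a}_x^\dagger \hat{a}_x$, and collect the $x$-dependence into the single integral $\int dx\, x\, e^{i(k'-k)x}$.

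The crucial analytic input is the distributional identity $\int dx\, x\, e^{i(k'-k)x} = -2\pi i\,\de'(k'-k)$, obtained by writing $x\, e^{i(k'-k)x} = -i\,\partial_{k'} e^{i(k'-k)x}$ and differentiating $2\pi\,\de(k'-k)$. Substituting this back and carrying out the $k'$-integration against $\de'(k'-k)$ — equivalently, one integration by parts whose boundary term vanishes because physical single-excitation wavepackets have rapidly decaying (here bandwidth-$\Omega$) spectral amplitudes — collapses the double integral to
$$ \hat{T} = \frac{i}{c}\int dk\, \hat{a}_k^\dagger\, \partial_k \hat{a}_k\,. $$
Finally, rewriting the $k$-derivative as the forward difference $\partial_k \hat{a}_k = \lim_{\epsilon\to 0}(\hat{a}_{k+\epsilon} - \hat{a}_k)/\epsilon$ and exchanging the $\epsilon\to 0$ limit with the $k$-integral reproduces exactly the stated expression.

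The main obstacle is rigour rather than computation: every object here is an operator-valued distribution, so differentiating $\hat{a}_k$ in $k$, integrating $\de'$ by parts, and interchanging $\lim_{\epsilon\to 0}$ with $\int dk$ all require justification. I would make the argument precise by testing both sides on a dense domain of smooth single-excitation states $\int dk\, \tilde{\xi}(k)\, \hat{a}_k^\dagger \ket{0}$ with $\tilde{\xi}$ a Schwartz (or compactly supported) function, as assumed throughout Sec.~\ref{sec:time_ops}; on such states $\hat{T}$ acts as multiplication by $x/c$ in position space, equivalently as $\frac{i}{c}\partial_k$ on $\tilde{\xi}(k)$, and all the manipulations above reduce to elementary facts about Fourier transforms of Schwartz functions. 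A one-line remark that the identity extends to the $N$-excitation sectors by linearity — each excitation contributing one such term — then completes the proof.
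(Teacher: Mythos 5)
Your proposal is correct and follows essentially the same route as the paper's own proof: Fourier-transform $\hat{a}_x,\hat{a}_x^\dagger$, use $\int dx\,x\,e^{i(k'-k)x}=-2\pi i\,\delta'(k'-k)$ to collapse the double $k$-integral to $\frac{i}{c}\int dk\,\hat{a}_k^\dagger\partial_k\hat{a}_k$, and then replace the derivative by its forward-difference limit. The added remarks on domains of Schwartz-class single-excitation states and extension by linearity to the $N$-particle sectors are sensible rigour the paper omits, but they do not change the argument.
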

\begin{proof}
We start by expressing $\hat{a}_x,\hat{a}_x^\dagger$ as anti-transforms of $\hat{a}_x,\hat{a}_x^\dagger$, and we make use of the fact that, in the sense of distributions $\int dx x e^{-ikx} = 2\pi i \delta'(k)$ where $\delta'(x)$ is such that $\int dx \delta'(x)f(x) = - f'(0)$. Putting this together we can write
\begin{equation}
\begin{split}
     & \hat{T} = \frac{1}{2\pi}\int dx \int dk \int dk' \, x e^{i (k'-k) x} \Ao{k}^\dagger \Ao{k'} \\& = \frac{1}{2\pi}\int dk \int dk' \Big(\int dx\, x e^{i (k'-k) x} \Big) \Ao{k}^\dagger \Ao{k'}\\ &  =\frac{1}{2\pi}\int dk \int dk' \Big(-2\pi i \delta'(k'-k) \Big) \Ao{k}^\dagger \Ao{k'} \\&= -i \int dk \Ao{k}^\dagger \int dk' \delta'(k'-k) \Ao{k'} = i \int dk \Ao{k}^\dagger \Big[ \frac{\partial}{\partial k'} \Ao{k'} \Big]_{k'=k}\,.
\end{split}
\end{equation} 
Finally, by using the definition of derivative
\begin{equation}
    \frac{\partial}{\partial k} \hat{a}_{k}  = \lim_{\epsilon\to 0} \frac{1}{\epsilon} \Big( \hat{a}_{k\epsilon} - \hat{a}_k \Big)\,,
\end{equation}
we obtain the result.
\end{proof}
\begin{theorem}
    Given the time operator defined above, and given a continuous dispersion relation $\omega(k)$, the commutator with $\hat{H}_C$ reads
    $$ [\hat{T},\hat{H}_C] = i \hat{\Lambda} $$
    where $\hat{\Lambda}=\frac{1}{c}\int dk\, \omega'(k) \hat{a}_k^\dagger \hat{a}_k$\,.
\end{theorem}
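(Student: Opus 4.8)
The plan is to compute $[\hat{T},\hat{H}_C]$ directly by exploiting the $k$-basis representation of $\hat{T}$ provided by the preceding lemma, since in that basis $\hat{H}_C = \int dk\,\omega(k)\,\hat{a}_k^\dagger\hat{a}_k$ is diagonal and easy to commute against. First I would write
$$\hat{T} = i \int dk\, \hat{a}_k^\dagger \Big[\tfrac{\partial}{\partial k'}\hat{a}_{k'}\Big]_{k'=k}$$
(dropping the overall $1/c$ and reinstating it at the end), which is the form obtained in the lemma's proof just before taking the $\epsilon\to 0$ limit. Then I would evaluate $[\hat{T},\hat{H}_C]$ using the canonical relation $[\hat{a}_k,\hat{a}_{k'}^\dagger]=\delta(k-k')$, so that $[\hat{a}_k,\hat{H}_C]=\omega(k)\hat{a}_k$ and $[\hat{a}_k^\dagger,\hat{H}_C]=-\omega(k)\hat{a}_k^\dagger$. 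Distributing the commutator over the product $\hat{a}_k^\dagger \partial_{k'}\hat{a}_{k'}$, the $\hat{a}_k^\dagger$ factor contributes $-\omega(k)$ times the original integrand, while the $\partial_{k'}\hat{a}_{k'}$ factor contributes $\partial_{k'}(\omega(k')\hat{a}_{k'})$ evaluated at $k'=k$, i.e. $\omega'(k)\hat{a}_k + \omega(k)\,\partial_{k'}\hat{a}_{k'}|_{k'=k}$ by the product rule.

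The key cancellation is then immediate: the $-\omega(k)\hat{a}_k^\dagger\,\partial_{k'}\hat{a}_{k'}|_{k'=k}$ term from commuting the creation operator cancels the $\hat{a}_k^\dagger\,\omega(k)\,\partial_{k'}\hat{a}_{k'}|_{k'=k}$ term from commuting the annihilation operator, leaving only
$$[\hat{T},\hat{H}_C] = i\int dk\,\omega'(k)\,\hat{a}_k^\dagger\hat{a}_k = i\hat{\Lambda},$$
after restoring the $1/c$ prefactor. I would present this computation either at the level of the $\epsilon$-regularised expression — commuting $\hat{a}_k^\dagger\hat{a}_{k+\epsilon}-\hat{a}_k^\dagger\hat{a}_k$ with $\hat{H}_C$ gives $(\omega(k+\epsilon)-\omega(k))\hat{a}_k^\dagger\hat{a}_{k+\epsilon}$, and the difference quotient $(\omega(k+\epsilon)-\omega(k))/\epsilon \to \omega'(k)$ while $\hat{a}_{k+\epsilon}\to\hat{a}_k$ as $\epsilon\to 0$ — or directly with the derivative form; the $\epsilon$-regularised route is cleaner because it avoids manipulating derivatives of operator-valued distributions.

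The main obstacle is purely one of rigour rather than of idea: justifying the interchange of the $\epsilon\to 0$ limit with the $k$-integration, and making precise the sense in which $\partial_{k'}\hat{a}_{k'}$ and $\delta'(k-k')$ are handled as operator-valued distributions. For physically reasonable clock states — single-particle wave packets with smooth, compactly supported (or rapidly decaying) spectral amplitude $\tilde\xi(k)$, and continuously differentiable $\omega(k)$ — all these manipulations are legitimate, and I would simply remark that the identities hold in the weak sense on such states, deferring to the distributional bookkeeping already set up in the lemma. No genuinely new estimate is needed beyond continuity of $\omega'$ on the support of the state.
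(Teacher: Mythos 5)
Your proposal is correct and takes essentially the same approach as the paper: starting from the $k$-basis representation of $\hat{T}$ established in the lemma, computing the commutator against the diagonal $\hat{H}_C$, and taking the $\epsilon\to 0$ limit via continuity of $\omega$. Indeed, the $\epsilon$-regularised route you recommend at the end is precisely the paper's proof, step for step, and your derivative-form variant is just a reorganization of the same cancellation.
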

\begin{proof}
The quantity we need to compute is
\begin{equation}
\begin{split}
[\hat{T},\hat{H}_C] =  \frac{i}{c} \lim_{\epsilon\to 0} \frac{1}{\epsilon}\int dk \int dk' \, \omega_{k'}\Big[ \hat{a}_k^\dagger \hat{a}_{k+\epsilon} - \hat{a}^\dagger_k \hat{a}_k ,  \hat{a}_{k'}^\dagger \hat{a}_{k'} \Big]\,.
\end{split}
\end{equation}
By making use of the relations 
\begin{equation}
\begin{split}
   &  [\Ao{k}^\dagger a_{k+\epsilon}, a^\dagger_{k'}\Ao{k'}] = \Ao{k}^\dagger \Ao{k'} \delta(k+\epsilon-k') - \Ao{k'}^\dagger a_{k+\epsilon}\delta(k-k')\\
   &  [\Ao{k}^\dagger \Ao{k}, a^\dagger_{k'}\Ao{k'}] = \Big( \Ao{k}^\dagger \Ao{k'} - \Ao{k'}^\dagger \Ao{k} \Big) \delta(k-k')\,,
\end{split}
\end{equation}
we get
\begin{equation}
\begin{split}
 & [\hat{T},\hat{H}_C] =  \frac{i}{c} \lim_{\epsilon\to 0} \frac{1}{\epsilon} \Big(\int dk \, \omega_{k+\epsilon}\Ao{k}^\dagger \Ao{k+\epsilon} - \int dk\, \omega_k\Aod{k} \Ao{k+\epsilon}  \Big) \\  & = \frac{i}{c} \lim_{\epsilon\to 0} \frac{1}{\epsilon} \int dk \, (\omega_{k+\epsilon} - \omega_k )\Aod{k} \Ao{k+\epsilon} \,.
\end{split}
\end{equation}
Now, if $\omega(k) \equiv \omega_k$ is a continuous function of $k$ we can write 
\begin{equation}
\begin{split}
   & \lim_{\epsilon\to 0} \Big(\frac{\omega_{k+\epsilon} - \omega_k }{\epsilon} \Aod{k} \Ao{k+\epsilon}  \Big) \\ & =\lim_{\epsilon\to 0} \frac{\omega_{k+\epsilon} - \omega_k }{\epsilon} \times \lim_{\epsilon\to 0}\Aod{k} \Ao{k+\epsilon} = \omega'(k)\Aod{k}\Ao{k} \,,   
\end{split}
\end{equation}
and therefore
\begin{equation}
    [\hat{T},\hat{H}_C] = \frac{i}{c}\int dk\, \omega'(k)\Ao{k}^\dagger\Ao{k}
\end{equation}
\end{proof}
From the theorem above, we can easily drawn some initial conclusions, as exemplified in the following two corollaries:
\begin{corollary}
    Given the time operator defined above, and given any continuous dispersion relation 
    $\omega(k)$, 
    the resulting modified commutator
    $$ [\hat{T},\hat{H}_C] = i \hat{\Lambda} $$
    is such that 
    $[\hat{\Lambda},\hat{H}_C]=0$, 
    i.e. 
    $\hat{\Lambda}$ 
    is a constant of motion. In particular, the Heisenberg picture representation of 
    $\hat{\Lambda}$ 
    reads
    $$ \hat{\Lambda}_H(t) = e^{i\hat{H}_Ct}\hat{\Lambda} e^{-i\hat{H}_Ct} = \Lambda\,.$$
\end{corollary}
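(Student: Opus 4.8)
The plan is to reduce everything to the elementary observation that the mode number operators $\hat{n}_k := \hat{a}_k^\dagger \hat{a}_k$ form a mutually commuting family, and that both $\hat{\Lambda}$ and $\hat{H}_C$ are c-number-weighted integrals of these operators. From the preceding theorem we have $\hat{\Lambda} = \frac{1}{c}\int dk\, \omega'(k)\,\hat{a}_k^\dagger \hat{a}_k$, while by definition $\hat{H}_C = \int dk'\, \omega_{k'}\,\hat{a}_{k'}^\dagger \hat{a}_{k'}$.

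First I would substitute both representations into $[\hat{\Lambda},\hat{H}_C]$, pull the scalar weights $\omega'(k)$ and $\omega_{k'}$ outside the commutator, and thereby reduce the problem to a double integral over $[\hat{a}_k^\dagger \hat{a}_k,\,\hat{a}_{k'}^\dagger \hat{a}_{k'}]$. Using the identity already recorded in the proof of the previous theorem,
\begin{equation*}
[\hat{a}_k^\dagger \hat{a}_k,\,\hat{a}_{k'}^\dagger \hat{a}_{k'}] = \big(\hat{a}_k^\dagger \hat{a}_{k'} - \hat{a}_{k'}^\dagger \hat{a}_k\big)\,\delta(k-k')\,,
\end{equation*}
the Dirac delta collapses the two integration variables, and on its support ($k=k'$) the operator coefficient $\hat{a}_k^\dagger \hat{a}_{k'} - \hat{a}_{k'}^\dagger \hat{a}_k$ vanishes identically. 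Hence the integrand is zero, $[\hat{\Lambda},\hat{H}_C]=0$, and $\hat{\Lambda}$ is a constant of motion.

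For the Heisenberg-picture statement I would then simply note that, since $\hat{\Lambda}$ commutes with $\hat{H}_C$, it commutes with $e^{\pm i\hat{H}_C t}$ as well, so
\begin{equation*}
\hat{\Lambda}_H(t) = e^{i\hat{H}_C t}\,\hat{\Lambda}\,e^{-i\hat{H}_C t} = \hat{\Lambda}\,e^{i\hat{H}_C t}e^{-i\hat{H}_C t} = \hat{\Lambda}\,.
\end{equation*}
Equivalently, the Heisenberg equation of motion for $\hat{\Lambda}_H$ has a right-hand side proportional to $[\hat{\Lambda},\hat{H}_C]=0$, so $\hat{\Lambda}_H$ is time-independent and equal to its initial value $\hat{\Lambda}$.

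There is no real obstacle here; the computation is routine. The only point deserving a moment's care is the distributional manipulation above — one should check that no spurious $\delta(0)$-type contributions survive, which they do not, precisely because the operator coefficient of $\delta(k-k')$ already vanishes when $k=k'$, so the expression is zero before any singular integration is attempted.
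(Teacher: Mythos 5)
Your proof is correct, and since the paper leaves this corollary unproved (presenting it as an immediate consequence of the preceding theorem), your argument is exactly the natural fleshing-out the authors had in mind: both $\hat{\Lambda}$ and $\hat{H}_C$ are scalar-weighted integrals of the mutually commuting number densities $\hat{a}_k^\dagger\hat{a}_k$, so they are simultaneously diagonal and commute. Your remark about the distributional bookkeeping — that the operator coefficient $\hat{a}_k^\dagger\hat{a}_{k'}-\hat{a}_{k'}^\dagger\hat{a}_k$ vanishes on the support of $\delta(k-k')$ before any singular integration is performed — is a sound and worthwhile sanity check, and the passage to the Heisenberg picture via commutation with $e^{\pm i\hat{H}_C t}$ (or equivalently via the vanishing of the Heisenberg equation's right-hand side) is standard and correct.
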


\begin{corollary}
    Given the time operator defined above, and given a linear dispersion relation $\omega(k)=ck$, the commutator with $\hat{H}_C$ reads
    $$ [\hat{T},\hat{H}_C] = i \hat{N} $$
    where $\hat{N}=\int dk\,\hat{a}_k^\dagger \hat{a}_k$ is the number operator.
\end{corollary}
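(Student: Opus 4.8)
The plan is to read the statement off directly from the theorem just proved. That theorem establishes $[\hat{T},\hat{H}_C]=i\hat{\Lambda}$ with $\hat{\Lambda}=\frac{1}{c}\int dk\,\omega'(k)\,\hat{a}_k^\dagger\hat{a}_k$ for an arbitrary continuous dispersion relation $\omega(k)$, so the whole argument collapses to the substitution $\omega(k)=ck$. Then $\omega'(k)=c$ identically, the constant factor $c$ comes out of the integral and cancels the prefactor $1/c$, and one is left with $\hat{\Lambda}=\int dk\,\hat{a}_k^\dagger\hat{a}_k=\hat{N}$, hence $[\hat{T},\hat{H}_C]=i\hat{N}$. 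That is the entire proof; no new estimate or limiting argument is needed.

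As an independent cross-check that bypasses the theorem, I would observe that for a linear dispersion the clock Hamiltonian is $\hat{H}_C=c\int dk\,k\,\hat{a}_k^\dagger\hat{a}_k$, \ie $c$ times the second-quantized momentum operator, whereas $\hat{T}=\frac{1}{c}\int dx\,x\,\hat{a}_x^\dagger\hat{a}_x$ is $1/c$ times the second-quantized position operator. Their commutator is therefore the Fock-space lift of the canonical relation $[x,p]=i$, that is, $i$ times the total excitation number. Concretely, inserting the $k$-basis representation of $\hat{T}$ from the lemma and reusing the brackets $[\hat{a}_k^\dagger\hat{a}_{k'},\hat{a}_q^\dagger\hat{a}_q]$ already tabulated in the proof of the theorem reproduces $i\hat{N}$ after the same $\epsilon\to 0$ limit.

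I do not expect any real obstacle: every analytically delicate step---the distributional identity $\int dx\,x\,e^{-ikx}=2\pi i\,\delta'(k)$ and the interchange of the $\epsilon\to 0$ limit with the $k$-integration---has already been handled in the proofs of the lemma and the theorem, and nothing of that nature recurs here. The one point I would stress in the write-up is interpretive rather than technical: the right-hand side is $i\hat{N}$, not $i\mathbb{1}$, so the exact Pauli relation is recovered only after restricting to a sector of fixed particle number and rescaling, $\hat{\tau}=\frac{1}{N}\hat{\mathbb{P}}_N\hat{T}\hat{\mathbb{P}}_N$, precisely as in Sec.~\ref{sec:time_ops}.
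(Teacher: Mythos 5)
Your proposal is correct and follows exactly the route the paper intends: the corollary is an immediate specialization of the theorem $[\hat{T},\hat{H}_C]=i\hat{\Lambda}$ to $\omega(k)=ck$, where $\omega'(k)=c$ cancels the $1/c$ prefactor and leaves $\hat{\Lambda}=\hat{N}$. The additional cross-check and the remark about recovering the Pauli relation only on a fixed-particle-number sector are consistent with the paper's own discussion but not needed for the proof.
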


As a consequence, note that the Heisenberg equations of motion for this time operator read
\begin{equation}
    \dot{\hat{T}}_H = i [\hat{H}_C,\hat{T}_H] = i[\hat{H}_C,\hat{T}] = \hat{\Lambda}\,.
\end{equation}
We are now in a position to define the concept of a clock's degradation, i.e. the spreading of a clock state under its own Hamiltonian dynamics.

\begin{definition}[Clock Degradation]
    Given a time operator $\hat{T}$ and an initial clock state $\ket{\psi_C}$, the degradation $D(t)$ of the clock is defined as 
    $$ D(t) = \sqrt{\text{Var}(\hat{T})} = \left(\langle \hat{T}^2\rangle - \langle \hat{T} \rangle ^2\right)^{\frac{1}{2}} \,. $$
\end{definition}

Given the modified Heisenberg algebra $[\hat{T},\hat{H}_C]=i\hat{\Lambda}$, we can characterize the resulting degradation of a clock state. We find that the time dependency of $D_0(t)$ can be extracted: it is linear for any (continuous) choice of dispersion relation $\omega(k)$, and proportional to the variance of the operator $\hat{\Lambda}$.

\begin{theorem}[Characterization of self-degradation]
Given a dispersion relation $\omega(k)$, and its corresponding modified commutator $[\hat{T},\hat{H}_C]=i \hat{\Lambda}$, the clock degradation obeys
    $$  D(t) = \sqrt{\rm{Var} (\hat{\Lambda})} t$$
\end{theorem}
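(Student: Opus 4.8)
The plan is to integrate the Heisenberg equation of motion for $\hat{T}$ and then evaluate a single variance on the initial clock state. The two ingredients I would use are both already established above: the Heisenberg equation $\dot{\hat{T}}_H(t) = \hat{\Lambda}$, and the corollary that $\hat{\Lambda}$ is a constant of motion, $\hat{\Lambda}_H(t) = e^{i\hat{H}_C t}\hat{\Lambda}e^{-i\hat{H}_C t} = \hat{\Lambda}$. Because the operator on the right-hand side of the Heisenberg equation carries no time dependence, the equation integrates to the exact operator identity
\begin{equation}
    \hat{T}_H(t) = \hat{T}_H(0) + \hat{\Lambda}\,t = \hat{T} + \hat{\Lambda}\,t\,,
\end{equation}
with no time-ordering subtleties. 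Recalling that $D(t)^2$ is the variance of $\hat{T}$ in the evolved state, equivalently the variance of $\hat{T}_H(t)$ in the initial state $\ket{\psi_C}$, the problem is reduced to computing $\text{Var}_{\psi_C}(\hat{T} + t\hat{\Lambda})$.

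Next I would expand the square, $(\hat{T} + t\hat{\Lambda})^2 = \hat{T}^2 + t(\hat{T}\hat{\Lambda} + \hat{\Lambda}\hat{T}) + t^2\hat{\Lambda}^2$, which gives
\begin{equation}
    D(t)^2 = \text{Var}(\hat{T}) + 2t\,\text{Cov}(\hat{T},\hat{\Lambda}) + t^2\,\text{Var}(\hat{\Lambda})\,,
\end{equation}
with $\text{Cov}(\hat{T},\hat{\Lambda}) = \tfrac12\langle\hat{T}\hat{\Lambda} + \hat{\Lambda}\hat{T}\rangle - \langle\hat{T}\rangle\langle\hat{\Lambda}\rangle$ and all moments taken in $\ket{\psi_C}$. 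I would then invoke the regime in which this appendix operates, $W\to 0$: the initial clock pulse is maximally localized in position, hence an (improper) eigenstate of $\hat{T}$, so $\text{Var}(\hat{T}) = 0$ at $t=0$. By Cauchy--Schwarz, $|\text{Cov}(\hat{T},\hat{\Lambda})|^2 \le \text{Var}(\hat{T})\,\text{Var}(\hat{\Lambda}) = 0$, so the cross term vanishes too, leaving $D(t)^2 = t^2\,\text{Var}(\hat{\Lambda})$, i.e. $D(t) = \sqrt{\text{Var}(\hat{\Lambda})}\,t$. Finally, since $\hat{\Lambda}$ is conserved, its variance may be evaluated once on the initial state, which is what makes the formula operationally useful.

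The main obstacle is giving a clean meaning to the maximally-localized initial state: a strict position eigenstate is non-normalizable, so $\text{Var}(\hat{\Lambda})$ on it need not be finite, and a fully rigorous argument must either replace it by a sequence of normalizable, increasingly localized states and control the $W\to 0$ limit, or else read the statement as the assertion that $D(t)$ grows linearly with slope $\sqrt{\text{Var}(\hat{\Lambda})}$, the subleading constant $\text{Var}(\hat T)$ and the cross term being fixed by the initial preparation and vanishing precisely in the sharply-localized limit. Everything else is elementary manipulation of first and second moments, and the only structural fact one truly needs — that $\hat{\Lambda}_H(t)$ is genuinely $t$-independent — is supplied by the corollary $[\hat{\Lambda},\hat{H}_C]=0$.
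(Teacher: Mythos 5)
Your approach is the same one the paper uses --- integrate the Heisenberg equation for $\hat{T}_H$ using the conservation of $\hat{\Lambda}$, then evaluate the variance on the initial state --- but you execute it more carefully on a point the paper glosses over. The paper's own proof writes the solution as $\hat{T}_H(t) = \hat{\Lambda}\,t$, silently dropping the integration constant $\hat{T}_H(0) = \hat{T}$, so that the constant term $\text{Var}(\hat{T})$ and the cross term $2t\,\text{Cov}(\hat{T},\hat{\Lambda})$ in the expanded variance simply never appear. You correctly keep $\hat{T}_H(t) = \hat{T} + \hat{\Lambda}\,t$ and then supply the missing argument: in the $W\to 0$ regime of the appendix the initial pulse is sharply localized, hence (approximately) a $\hat{T}$-eigenstate, giving $\text{Var}(\hat{T}) = 0$ and, by Cauchy--Schwarz, a vanishing covariance. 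Your closing caveat --- that a strict position eigenstate is non-normalizable, so $\text{Var}(\hat{\Lambda})$ may diverge, making the Cauchy--Schwarz bound $0\cdot\infty$ indeterminate and requiring a controlled limit over normalizable, increasingly localized states --- is a genuine subtlety the paper does not address at all. In short, you reach the same formula by the same route while exposing and partially closing an elision in the published argument; the alternative reading you offer (the theorem as a statement about the \emph{slope} of $D(t)$, with the $t$-independent contributions fixed by preparation) is also the most defensible interpretation of what the theorem actually asserts.
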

\begin{proof}
    First, we exploit the freedom of writing expectation values in the Schr\"odinger or Heisenberg picture as follows
    \begin{equation}
        \langle \hat{T}^2\rangle = \bra{\psi_C(t)} \hat{T}^2 \ket{\psi_C(t)} = \bra{\psi_C} \hat{T}^2_H(t) \ket{\psi_C}\,.
    \end{equation}
Then, we use the Heisenberg equation of motion
\begin{equation}
    \dot{\hat{T}}_H = \hat{\Lambda}_H(t) = \hat{\Lambda}
\end{equation}
implying the solution $\hat{T}_H(t) = \hat{\Lambda} t$, which we can plug in the expression for $D(t)$ and get
\begin{equation}
    D_0(t) = \sqrt{\bra{\psi_c} \hat{\Lambda}^2 t^2\ket{\psi_c} - \bra{\psi_c}\hat{\Lambda} t \ket{\psi_c}^2}\,.
\end{equation}
By extracting the time dependency we get the result.
\end{proof}

It is interesting to note that, under fairly general assumptions, linear dispersion relations are the only ones that guarantee the existence of non-degrading clock states, as shown by the following

\begin{theorem}
    Given a single-particle clock state $\ket{\psi_C}$, whose support in $k$-space is defined by a compact interval $\Omega\in\mathbb{R}$ with non-zero length, and given a continuous and injective dispersion relation $\omega(k):\Omega\to\mathbb{R}$, one has $D(t)=0$ $\forall t$ if and only if $\omega(k)$ is linear in the whole support $\Omega$
\end{theorem}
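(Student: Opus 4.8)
The plan is to reduce the statement to a one-dimensional question about the multiplication operator $\tfrac1c\omega'(k)$ acting on the single-particle wavefunction $\tilde\xi(k)$, and then close the argument with a continuity/density remark. First I would invoke the characterization of degradation already proven, $D(t)=\sqrt{\text{Var}(\hat\Lambda)}\,t$: since this holds for all $t$, the condition $D(t)=0$ $\forall t$ is equivalent (evaluate at $t=1$) to $\text{Var}(\hat\Lambda)=0$ in the initial state $\ket{\psi_C}$. Because $\hat\Lambda$ is self-adjoint, $\text{Var}(\hat\Lambda)=0$ is in turn equivalent to $\ket{\psi_C}$ being an eigenvector of $\hat\Lambda$, i.e. $\hat\Lambda\ket{\psi_C}=\langle\hat\Lambda\rangle\ket{\psi_C}$. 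So the whole theorem comes down to deciding when the one-particle clock state is an eigenstate of $\hat\Lambda=\tfrac1c\int dk\,\omega'(k)\,\hat a_k^\dagger\hat a_k$.

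Next I would make the one-particle structure explicit. Writing $\ket{\psi_C}=\int_\Omega dk\,\tilde\xi(k)\,\hat a_k^\dagger\ket0$ with $\int_\Omega|\tilde\xi|^2=1$, and using the bosonic commutation relations, one gets $\hat\Lambda\ket{\psi_C}=\tfrac1c\int_\Omega dk\,\omega'(k)\tilde\xi(k)\,\hat a_k^\dagger\ket0$; that is, on the one-particle sector $\hat\Lambda$ is just multiplication by $\tfrac1c\omega'(k)$, so $\text{Var}(\hat\Lambda)=\tfrac1{c^2}\big[\int_\Omega|\tilde\xi|^2(\omega')^2-(\int_\Omega|\tilde\xi|^2\omega')^2\big]$. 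This variance vanishes precisely when $\omega'(k)$ is constant $|\tilde\xi(k)|^2dk$-almost everywhere, equivalently when $\omega'(k)$ equals a fixed value $\mu$ at every $k$ where $\tilde\xi(k)\neq0$. Here I would use that the support of $\tilde\xi$ is the whole interval $\Omega$, so the set $\{k:\tilde\xi(k)\neq0\}$ is dense in $\Omega$, together with continuity of $\omega'$ — we take $\omega$ to be (continuously) differentiable, a hypothesis implicit wherever $\hat\Lambda$ is defined: a continuous function constant on a dense subset of an interval is constant on the whole interval, hence $\omega'\equiv\mu$ on all of $\Omega$, i.e. $\omega(k)=\mu k+\nu$ on $\Omega$. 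Injectivity of $\omega$ on $\Omega$ then rules out $\mu=0$ (a constant $\omega$ would not be injective), so $\omega$ is genuinely linear there.

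For the converse I would run the computation backwards: if $\omega(k)=\mu k+\nu$ on $\Omega$ then $\omega'\equiv\mu$ on $\Omega$, so $\hat\Lambda$ acts on any state supported in $\Omega$ as $(\mu/c)\hat N_C$; on the single-particle state $\ket{\psi_C}$ this is the definite value $\mu/c$, hence $\text{Var}(\hat\Lambda)=0$ and, by the characterization theorem, $D(t)=0$ for all $t$. This direction is immediate once the $k$-space picture of $\hat\Lambda$ is in place.

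I expect the only delicate point to be the step from ``$\text{Var}(\hat\Lambda)=0$'' to ``$\omega'$ constant on the entire support $\Omega$'': one must be careful that the single-particle density $|\tilde\xi(k)|^2$ actually charges the full interval (so that the almost-everywhere equality upgrades, via continuity of $\omega'$, to a pointwise statement on $\Omega$), and that $\omega$ is differentiable to begin with, which is inherited from the definition of $\hat\Lambda$ and should be stated explicitly rather than derived from mere continuity and injectivity. Everything else is a routine manipulation of the one-particle mode functions and the commutation relations, together with the already-established facts that $D(t)=\sqrt{\text{Var}(\hat\Lambda)}\,t$ and that $\hat\Lambda$ is a constant of motion.
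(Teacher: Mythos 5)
Your proposal is correct, and in the forward direction it takes a noticeably different route from the paper. Both arguments begin identically: $D(t)=0$ for all $t$ reduces, via the characterization $D(t)=\sqrt{\text{Var}(\hat\Lambda)}\,t$, to $\ket{\psi_C}$ being an eigenstate of $\hat\Lambda$. From there the paper argues operator-theoretically: since $\omega$ is injective and $\Omega$ has nonzero length, $\ket{\psi_C}$ cannot be an eigenstate of $\hat H_C$, and because $[\hat\Lambda,\hat H_C]=0$ the ``only possibility'' is that $\hat\Lambda$ is degenerate across $\Omega$, i.e.\ $\omega'$ is constant there. You instead restrict $\hat\Lambda$ to the one-particle sector, where it is multiplication by $\omega'(k)/c$, compute $\text{Var}(\hat\Lambda)=\frac{1}{c^2}\bigl[\int_\Omega|\tilde\xi|^2(\omega')^2-(\int_\Omega|\tilde\xi|^2\omega')^2\bigr]$ explicitly, and upgrade ``$\omega'$ constant $|\tilde\xi|^2$-a.e.'' to ``constant on all of $\Omega$'' via density of the support and continuity of $\omega'$. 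Your version is the more rigorous of the two: the paper's degeneracy step is stated rather than derived, whereas your multiplication-operator computation makes transparent exactly which hypotheses carry the load (full support of $\tilde\xi$ on $\Omega$, differentiability of $\omega$ with continuous $\omega'$ — a hypothesis the theorem statement omits but which is needed even to define $\hat\Lambda$, as you rightly flag). You also use injectivity differently: the paper needs it to rule out $\ket{\psi_C}$ being an $\hat H_C$-eigenstate, while in your argument it only excludes the degenerate case $\omega'\equiv 0$; this shows your route is slightly more economical with the hypotheses. The converse direction is the same direct computation in both.
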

\begin{proof}
     Let us suppose that $D(t)=0$ identically. Then $\text{Var}(\hat{\Lambda})=0$ on $\ket{\psi_C}$, which means that $\ket{\psi_C}$ is an eigenstate of $\hat{\Lambda}$. Furthermore, since $\omega(k)$ is an injective function, $\ket{\psi_C}$ cannot be an eigenstate of $\hat{H}_C$ if $\Omega$ has non-vanishing length. However, since $[\hat{\Lambda},\hat{H}_C]=0$, the only possibility is that $\hat{\Lambda}$ is degenerate on $\Omega$, i.e. the function $\omega'(k)$ is constant on $\Omega$. Therefore $\omega(k)$ is linear on $\Omega$.\\
     Conversely, let us suppose that $\omega(k)$ is linear. Then $\hat{\Lambda} = \hat{N}$ and, since $\hat{N}\ket{\psi_C}=\ket{\psi_C}$ we have $\text{Var}(\hat{N})=0$ and therefore $D(t)=0$.
\end{proof}

When a clock is coupled to an engine, we can assume without loss of generality that the initial state is initially uncorrelated and given by $|\psi\rangle^{(0)} = |\psi_{C}\rangle^{(0)} \otimes |\psi_{E}\rangle^{(0)}$, evolving under the total Hamiltonian $H = H_C + H_E + V_{EC}$.
\begin{lemma}
    The state of the clock at a time $t$, when coupled to an engine is given by 
    $$\rho_C(t) = {\rm Tr}_E\rho_{CE}(t) = \sum_j p_j\hat{\mathcal{U}}_j |\psi_C\rangle\langle\psi_C| \hat{\mathcal{U}}_j^\dagger,$$
    where $\hat{\mathcal{U}}_j = e^{-i(\hat{H}_C +  g\hat{W}^{(j)})t}$, such that $g$ encodes the coupling strength between clock and engine and $\hat{W}^{(i)}$ encompasses the contribution from the engine to the clock Hamiltonian.  
\end{lemma}
\begin{proof}
Let us write the initial state in the engine eigenbasis, $|\psi\rangle^{(0)} = |\psi_{C}\rangle^{(0)} \otimes \sum_j c_j |j\rangle$, such that $\hat{H}_E|j\rangle = E_j |j\rangle$. The evolved composed state can be obtained as 
\begin{align*}
    |\psi\rangle(t) &= e^{-i \hat{H} t}\, |\psi\rangle^{(0)}\\
    & = e^{-i (\hat{H}_C + \hat{H}_E + \hat{V}_{EC}) t}\, |\psi_C\rangle^{(0)} \otimes \sum_j c_j |j\rangle\\
    & = \sum_j c_j |j\rangle \otimes  e^{-i (\hat{H}_C + E_j+ \langle k|\hat{V}_{EC}|j\rangle) t} |\psi_C\rangle^{(0)}\\
    &= \sum_j c_j |j\rangle \otimes  e^{-i (\hat{H}_C + g\hat{W}^{(j)}) t} |\psi_C\rangle^{(0)}, 
\end{align*}
where in the second to last line we defined $g\hat{W}^{(j)} = E_j + \langle j|\hat{V}_{EC}|j\rangle$ as the effect on the clock due to the coupling with the engine, with coupling strength $g$. The reduced state of the clock at a time $t$, is obtained by simply tracing out the engine 
\begin{align*}
    \rho_C(t) &= \text{Tr}_E[\rho(t)]\\
    &= \text{Tr}_E\left[ e^{-i \hat{H} t}\, |\psi\rangle\langle\psi|^{(0)} \,e^{i \hat{H} t}\right]\\
    &= \text{Tr}_E\Big[\sum_{j'}\sum_{j''} c_{j'} c_{j''} |j'\rangle\langle j''| e^{-i (\hat{H}_C + g\hat{W}^{(j)}) t}\\
    &\hspace{4em}|\psi_C\rangle\langle\psi_C|^{(0)}\,e^{i (\hat{H}_C + g\hat{W}^{(j)}) t}\Big]\\
    &= \sum_j|c_j|^2 e^{-i (\hat{H}_C + g\hat{W}^{(j)}) t} |\psi_C\rangle\langle\psi_C|^{(0)}\,e^{i (\hat{H}_C + g\hat{W}^{(j)}) t}\\
    &= \sum_j p_j \hat{\mathcal{U}}_j \rho_C(0)\hat{\mathcal{U}}_j^\dagger. 
\end{align*}
The clock does not evolve unitarily, but with a sum of unitaries, where each $\hat{\mathcal{U}}_j = e^{-i(\hat{H}_C + g\hat{W}^{(j)})t}$ encodes the contribution of the engine interaction. We treat $g \hat{W}^{(j)}$ as a perturbation to the free Hamiltonian of the clock.
\end{proof}

\begin{lemma}[Expectation value of the time operator]
In the open system scenario, the expectation value of the time operator acquires an additional contribution due to the interaction with the engine. Specifically,
    \begin{equation*}
        \langle \hat{T}\rangle = \langle \hat{\Lambda}\rangle\, t + g\,\langle \hat{V}(t)\rangle,  
    \end{equation*}
    where $\langle \hat{V}(t)\rangle = i\,\langle [ \hat{\mathcal{W}}(t), \hat{T}_H(t) ]\rangle$, and $\hat{\mathcal{W}}(t) := \int_{0}^t d\tau\, \sum_j p_j\,\,\hat{W}^{(j)}_H(-\tau)$.
\end{lemma}
\begin{proof}
The expectation value of the time operator in the open system scenario can be computed as
\begin{align}
    \langle \hat{T}\rangle &= \text{Tr}[\rho_C(t)\,\hat{T}] \nonumber\\
                    & = \sum_j p_j \text{Tr}[\rho_C(0)\,\hat{\mathcal{U}}_j^\dagger \,\hat{T} \, \hat{\mathcal{U}}_j].
\end{align}
Let us now compute the term on the right hand side of the equation  above 
\begin{align*}
    \hat{\mathcal{U}}_j^\dagger \,\hat{T} \, \hat{\mathcal{U}}_j &= e^{-i(\hat{H}_C + g\hat{W}^{(j)})t} \, \hat{T} \, e^{i(\hat{H}_C +  g\hat{W}^{(j)})t} \\
    &= \hat{\mathcal{V}}^{(j)}\, \hat{T}_{H}(t) (\hat{\mathcal{V}}^{(j)})^* \\
    &= \hat{T}_{H}(t) + i g \int_{0}^t d\tau \, \big( \hat{W}^{(j)}_H(-\tau)\, \hat{T}_H(t) \\
    &\hspace{4em} - \hat{T}_H(t)\,\hat{W}^{(j)}_H(-\tau) \big) \\
    &= \hat{T}_{H}(t) - i g \left[\hat{T}_H(t), \int_{0}^t d\tau \, \,\hat{W}^{(j)}_H(-\tau)\right]
\end{align*}
where $\hat{\mathcal{V}}^{(j)} =\left(\mathbb{1} + ig \int_{0}^t d\tau \, e^{-i \hat{H}_C\tau}\, \hat{W}^{(j)}\, e^{i\hat{H}_C\tau} \right)$, corresponds to an expansion of $\hat{\mathcal{U}}_j$ to first order in $g$. Then, we move to the Heisenberg picture. The expectation value of the time operator can then be written as 
\begin{equation}
\label{eq:meanT}
    \langle \hat{T}\rangle = \langle \hat{\Lambda}\rangle\, t +  i g \langle [ \hat{\mathcal{W}}(t), \hat{T}_H(t) ]\rangle
\end{equation}
where $\hat{\mathcal{W}}(t) := \int_{0}^t d\tau\, \sum_j p_j\,\,\hat{W}^{(j)}_H(-\tau)$, and the first term corresponds to the self-degradation, computed in Theorem \ref{th:self}.  It follows that $ \langle \hat{V}(t)\rangle = i\,\langle [ \hat{\mathcal{W}}(t), \hat{T}_H(t) ]\rangle$. 
\end{proof}

\begin{theorem}[Clock-degradation]
    Consider a quantum clock with free Hamiltonian $H_C$, coupled to an engine with Hamiltonian $H_E$, with coupling strength $g$. For sufficiently small $g$, i.e. weak-coupling limit, up to second order terms in $g$, the degradation of the clock is
    $$D(t) = \sqrt{{\rm Var}(\hat{\Lambda})t^2 + 2 \,g \,{\rm Cov}(\hat{V}(t), \hat{T}_H(t))}.$$
\end{theorem}
\begin{proof}
In the context of a perturbation to the free Hamiltonian, we neglect second order terms in $g$, which give us  
\begin{equation}
    \langle \hat{T} \rangle ^2 = \langle \hat{\Lambda} \rangle ^2 t^2 + 2 i g \langle \hat{\Lambda} \rangle \langle [\hat{\mathcal{W}}(t), \hat{T}_H(t)]\rangle \,t + \mathcal{O}(g^2)
\end{equation}
and 
\begin{equation}
    \langle \hat{T} ^2 \rangle = \langle \hat{\Lambda}^2 \rangle t^2 + i g  \langle \hat{\Lambda} \rangle \langle [\hat{\mathcal{W}}(t), \hat{T}_H^2(t)]\rangle + \mathcal{O}(g^2)
\end{equation}
The variance of the time operator $\hat{T}$ can be written as  
\begin{equation}
    \text{Var}(\hat{T}) = \text{Var}(\hat{\Lambda})t^2 + 2 i g \text{ Cov}([\hat{\mathcal{W}}(t), \hat{T}_H], \hat{T}_H) + \mathcal{O}(g^2)
\end{equation}
The equation above can be written as
\begin{equation}
    \label{eq:Var}
    \text{Var}(\hat{T}) = \text{Var}(\hat{\Lambda})t^2 + 2 g \text{ Cov}(\hat{V}(t), \hat{T}_H(t)) + \mathcal{O}(g^2)
\end{equation}
where, again we used $\hat{V}(t) = i [\hat{\mathcal{W}}(t), \hat{T}_H(t)]$. 
\end{proof}

\bibliography{biblio}
\end{document}